\newcommand{\prob}[1]{\mathbb P\left [ #1 \right ]}
\DeclareMathOperator{\poly}{poly}
\DeclareMathOperator{\match}{\mathsf{match}}
\newtheorem{theorem}{Theorem}
\newtheorem{corollary}[theorem]{Corollary}
\begin{document}
\title{A Note on Logarithmic Space Stream Algorithms for Matchings in Low Arboricity Graphs}
\author{Andrew McGregor and Sofya Vorotnikova}
\thanks{Supported by NSF CAREER Award CCF-0953754 and CCF-1320719 and a Google Faculty Research Award.}
\address{College of Computer and Information Sciences, University of Massachusetts.}
\date{}
\maketitle

\section{Introduction}

We present a data stream algorithm for estimating the size of the maximum matching of a low arboricity graph. Recall that a graph has arboricity $\alpha$ if its edges can be partitioned into at most $\alpha$ forests and that a planar graph has arboricity $\alpha=3$. Estimating the size of the maximum matching in such graphs has been a focus of recent data stream research \cite{one,two,EsfandiariHLMO15,BuryS15a,ChitnisCEHMMV16}. See also \cite{McGregor14} for a survey of the general area of graph algorithms in the stream model.

A surprising result on this problem was recently proved by Cormode et al.~\cite{one}. They designed an ingenious algorithm that returned a $(22.5\alpha+6)(1+\epsilon)$ approximation using a single pass over the edges of the graph (ordered arbitrarily) and $O(\epsilon^{-2}\alpha \cdot \log n \cdot \log_{1+\epsilon} n)$ space\footnote{Here, and throughout, space is specified in words and we assume that an edge or a counter (between $0$ and $\alpha$) can be stored in one word of space.}. We improve the approximation factor to $(\alpha+2)(1+\epsilon)$ via a tighter analysis and show that, with a  modification of their algorithm, the space required can be reduced to  $O(\epsilon^{-2} \log n)$.

\section{Results}

Let $\match(G)$ be the maximum size of a matching in a graph $G$ and let $E_{\alpha}$ be the set of edges $uv$ where the number of edges incident to $u$ or $v$ that appear in the stream after $uv$ are both at most $\alpha$.

\subsection{A Better Approximation Factor} We first show a bound for $\match(G)$ in terms of  $|E_{\alpha}|$. Cormode et al.~proved a similar but looser bound.

\begin{theorem}\label{thm:bound}
$\match(G)\leq   |E_{\alpha}|\leq (\alpha+2) \match(G)$.
\end{theorem}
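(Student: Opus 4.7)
For the upper bound $|E_\alpha| \le (\alpha+2)\match(G)$, my plan is to apply Vizing's theorem to the subgraph $(V, E_\alpha)$. The key observation is that any edge $uv \in E_\alpha$ has at most $\alpha$ edges incident to $v$ arriving after it in the stream, so $uv$ lies among the last $\alpha+1$ edges incident to $v$; hence every vertex has at most $\alpha+1$ incident edges in $E_\alpha$, so $(V, E_\alpha)$ has maximum degree at most $\alpha+1$. Vizing's theorem then provides a proper edge colouring of this subgraph with at most $\alpha+2$ colours, partitioning $E_\alpha$ into $\alpha+2$ matchings of $G$; the largest such matching has size at least $|E_\alpha|/(\alpha+2)$, yielding $\match(G) \ge |E_\alpha|/(\alpha+2)$.

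For the lower bound $\match(G) \le |E_\alpha|$, my plan is to exhibit $\match(G)$ distinct edges in $E_\alpha$. Fix a maximum matching $M^*$ of $G$. Edges of $M^* \cap E_\alpha$ are themselves in $E_\alpha$; for each $e = uv \in M^* \setminus E_\alpha$, I aim to produce a distinct witness in $E_\alpha \setminus M^*$. Since $e \notin E_\alpha$, the defining condition fails at some endpoint, say $v$: at least $\alpha+1$ edges incident to $v$ appear after $e$ in the stream, and none of them can belong to $M^*$ (as $e$ is $v$'s matching edge). If one of these late edges lies in $E_\alpha$, take it as the witness; otherwise each such late edge $vy$ has its own failure at $y$, and I would iterate a ``chain'' that visits vertices of strictly increasing last-edge-time. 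The chain must terminate at an $E_\alpha$ edge, since the overall last edge of the stream trivially belongs to $E_\alpha$.

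The main obstacle is ensuring the witness assignment is injective: chains launched from different failing matching edges might collide at a common $E_\alpha$ edge. I would resolve this by a canonical tie-breaking rule (for instance, always following the endpoint with the latest last-edge-time), together with an argument that chains originating from distinct matching edges remain disjoint where it matters, invoking the arboricity of $G$ to control the combinatorial complexity of the late-edge structure. The Vizing argument for the upper bound is by contrast routine; carrying out the injectivity of the charging scheme in the lower bound is the technical crux.
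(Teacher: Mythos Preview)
Your upper bound argument is correct and actually more elementary than the paper's. Both the paper and you use the key observation that every vertex is incident to at most $\alpha+1$ edges of $E_\alpha$. The paper then sets $y_e = 1/(\alpha+1)$ on $E_\alpha$ to obtain a fractional matching and invokes (a corollary of) Edmonds' matching polytope theorem to compare its value to $\match(G)$; you instead apply Vizing's theorem to $(V,E_\alpha)$ to split $E_\alpha$ into at most $\alpha+2$ matchings. The Vizing route avoids the fractional/polyhedral machinery entirely and yields the same constant.

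Your lower bound, however, has a genuine gap. You acknowledge that injectivity of the chain-based charging is the crux and leave it unresolved; as stated, nothing prevents chains launched from many different matching edges from funnelling into the same terminal edge of $E_\alpha$. More importantly, your sketch never makes concrete use of the arboricity hypothesis, yet arboricity is essential here: for example, in $K_4$ with $\alpha=0$ (below its arboricity) and stream order $12,13,23,14,24,34$, one has $|E_0|=1<2=\match(K_4)$, so any argument that does not genuinely invoke the arboricity bound cannot succeed. The paper takes a completely different, global-counting route: let $H$ be the set of vertices of degree at least $\alpha+1$, and for $u\in H$ let $B_u$ be its last $\alpha+1$ incident edges. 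Counting edges that lie in $B_u\cap B_v$ versus $B_u\oplus B_v$ for their heavy endpoints, together with the arboricity bound on the subgraph induced by $H$, yields $|E_\alpha|\ge |H|+|E_L|$ where $E_L$ is the set of edges with no heavy endpoint; since any matching has at most $|H|$ edges touching $H$ and at most $|E_L|$ edges inside $E_L$, this gives $|E_\alpha|\ge \match(G)$. This argument sidesteps any per-edge charging and is where the arboricity assumption is actually spent.
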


\begin{proof} We first prove the left inequality. To do this define $y_{e}=1/(\alpha+1)$ if $e$ is in $E_{\alpha}$ and 0 otherwise. Note that $y_{e}$ is a fractional matching with maximum weight $1/(\alpha+1)$ and hence\footnote{It can be shown as a corollary of Edmonds Matching Polytope Theorem \cite{Edm65} that any fractional matching in which all edge weights are bounded by $\epsilon$ is at most a factor $1+\epsilon$ larger than the maximum integral matching. See~\cite[Theorem 5]{two} for details.}\,
$$\frac{|E_{\alpha}|}{\alpha+1}= \sum_{e} y_{e} \leq \left(1 + \frac{1}{\alpha+1}\right)\match(G) = \frac{\alpha+2}{\alpha+1}\match(G) \ .$$ 

It remains to prove the right inequality. Define $H$ to be the set of vertices with degree $\alpha+1$ or greater. We refer to these as the \emph{heavy} vertices. For $u \in H$, let $B_u$ be the set of the last $\alpha+1$ edges incident to $u$ that arrive in the stream. 

Say an edge $uv$ is \emph{good} if $uv\in B_u \cap B_v$ and \emph{wasted} if $uv\in B_u \oplus B_v$, i.e., the symmetric difference. Then $|E_\alpha|$ is exactly the number of good edges. Define 
\begin{align*}
w &= \mbox{number of good edges with exactly no end points in $H$} ~,\\
x &= \mbox{number of good edges with exactly one end point in $H$} ~,\\
y &= \mbox{number of good edges with two end points in $H$} ~,\\ 
z &= \mbox{number of wasted edges with two end points in $H$} ~,
\end{align*}
and note that $|E_\alpha| = w+x+y$. 
 
We know $
x+2y+z=(\alpha+1)|H|
$ because $B_u$ contains exactly $\alpha+1$ edges if $u\in H$. Furthermore, 
 $
z+y \leq \alpha |H|$ because the graph has arboricity $\alpha$. 
Therefore \[x+y\geq (\alpha+1)|H|- \alpha |H|=|H| \ .\]
 Let $E_L$ be the set of edges with no endpoints in $H$. Since every edge in $E_L$ is good,
 $w=|E_L|$. 
 Hence, $|E_\alpha|\geq |H|+|E_L|\geq \match(G)$ where the last inequality follows because at most one edge incident to each heavy vertex can appear in a matching.
\end{proof}

Let $G_t$ be the graph defined by the stream prefix of length $t$ and let $E_{\alpha}^t$ be the set of good edges with respect to this prefix, i.e., all edges $uv$ from $G_t$ where the number of edges incident to $u$ or $v$ that appear  after $uv$ in the prefix are both at most $\alpha$. By applying the theorem to $G_t$, and noting that $E^*\geq |E_\alpha|$ and $\match(G_t)\leq \match(G)$, we deduce the following corollary:

\begin{corollary}\label{cor:bound2}
Let $E^*=\max_t |E_{\alpha}^t|$. Then $\match(G)\leq E^* \leq (\alpha+2) \match(G)$.
\end{corollary}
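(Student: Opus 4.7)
\textbf{Proof plan for Corollary~\ref{cor:bound2}.} The corollary is essentially a direct consequence of Theorem~\ref{thm:bound} applied to each prefix graph, so my plan is to verify the two hints the authors give and stitch them together.

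First I would handle the upper bound $E^*\leq (\alpha+2)\match(G)$. The observation is that for each $t$, the quantity $|E_{\alpha}^t|$ is exactly the $|E_\alpha|$ that Theorem~\ref{thm:bound} would produce if we instantiated it on the graph $G_t$ equipped with the stream order induced by the prefix of length $t$. So the theorem gives $|E_\alpha^t|\leq (\alpha+2)\match(G_t)$ for every $t$. Since $G_t$ is a subgraph of $G$, we have $\match(G_t)\leq \match(G)$, and taking the maximum over $t$ yields $E^*\leq (\alpha+2)\match(G)$.

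For the lower bound $\match(G)\leq E^*$, I would simply take $t$ equal to the full length of the stream, so that $G_t=G$ and $E_\alpha^t=E_\alpha$. Then $E^*\geq |E_\alpha|$, and Theorem~\ref{thm:bound} gives $|E_\alpha|\geq \match(G)$, completing the proof.

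There is no real obstacle here: the only point to double-check is that the definition of $E_\alpha^t$ matches the definition of $E_\alpha$ for the graph $G_t$, i.e., that ``edges incident to $u$ or $v$ that appear after $uv$ in the prefix'' is the same as ``edges incident to $u$ or $v$ that appear after $uv$ in the stream of $G_t$.'' This is immediate from the fact that the stream of $G_t$ is precisely the length-$t$ prefix of the stream of $G$.
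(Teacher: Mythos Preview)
Your proposal is correct and matches the paper's own argument essentially verbatim: apply Theorem~\ref{thm:bound} to each prefix graph $G_t$ and combine with $\match(G_t)\leq \match(G)$ for the upper bound, and use $E^*\geq |E_\alpha|$ together with the theorem on $G$ itself for the lower bound. Your extra remark verifying that $E_\alpha^t$ coincides with the $E_\alpha$ of $G_t$ is a welcome sanity check that the paper leaves implicit.
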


\subsection{A (Slightly) Better Algorithm.}
See Figure \ref{fig:firstalg} for an algorithm that approximates $E^*$ to a $(1+\epsilon)$-factor in the insert-only graph stream model. The algorithm is a modification of the algorithm for estimating $|E_{\alpha}|$ designed by Cormode et al.~\cite{one}. The basic idea is to independently sample edges from $E_\alpha^t$ with probability that is high enough to obtain an accurate approximation of $|E_{\alpha}^t|$ and yet low enough to use a small amount of space. For every sampled edge $e = uv$, the algorithm stores the edge itself and two counters $c^u_e$ and $c^v_e$ for degrees of its endpoints in the rest of the stream. If we detect that a sampled edge is not in $E_\alpha^t$, i.e., either of the associated counters exceed $\alpha$, it is deleted. 

Cormode et al.~ran multiple instances of this basic algorithm corresponding to sampling probabilities $1, (1+\epsilon)^{-1}, (1+\epsilon)^{-2}, \ldots $ in parallel; terminated any instance that used too much space; and returned an estimate based on one of the remaining instantiations. Instead, we start sampling with probability 1 and put a cap  on the number of edges stored by the algorithm. Whenever the capacity is reached, the algorithm halves the sampling probability and deletes every edge currently stored with probability $1/2$. This modification saves a factor of $O(\epsilon^{-1} \log n)$ in the space use and update time of the algorithm. We save a further $O(\alpha)$ factor in the analysis by using the algorithm to estimate $E^*$ rather than $|E_\alpha|$.
The proof of correctness is  similar to that for the original algorithm.

\begin{figure*}[t]
\begin{center}
\fbox{
\begin{minipage}{6.2in}
{
\noindent {\bf Algorithm 1: {\sc Approximating $E^*$}}
\begin{enumerate}\parskip=0in
\item Initialize $S\leftarrow \emptyset$, $p=1$, $\max=0$
\item For each edge $e=uv$ in the stream:
\begin{enumerate}
\item With probability $p$ add $e$ to $S$ and initialize counters $c^u_e \leftarrow 0$ and $c^v_e \leftarrow 0$
\item For each edge $e' \in S$, if $e'$ shares endpoint $w$ with $e$: 
\begin{itemize}
\item Increment $c^w_{e'}$
\item If $c^w_{e'} > \alpha$, remove $e'$ from $S$ and corresponding counters
\end{itemize}
\item If $|S| > 30 \epsilon^{-2} \log n$:
\begin{itemize}
\item $p \leftarrow p/2$
\item Remove each edge in $S$ and corresponding counters with probability $1/2$
\end{itemize}
\item $\max \leftarrow \max(\max,|S|/p)$
\end{enumerate}
\item Return $\max$
\end{enumerate}
}
\end{minipage}
}
\end{center}
\caption{{\sc Approximating $E^*$} Algorithm.}\label{fig:firstalg}
\end{figure*}

\begin{theorem}
With high probability, Algorithm \ref{fig:firstalg} outputs a $(1+\epsilon)$ approximation of $E^*$.
\end{theorem}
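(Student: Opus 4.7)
The plan is to couple the algorithm's random choices via independent $\mathrm{Uniform}[0,1]$ variables $U_e$, one per edge, arranged so that the initial sampling and each subsequent halving reduce to the same threshold comparison: at every step $t$ the sample equals $S_t = \{e \in E_\alpha^t : U_e \le p_t\}$, where $p_t = 2^{-k_t}$ is the current (random) sampling probability. Under this coupling, for any fixed pair $(t,k)$ the count $X_{t,k} := |\{e \in E_\alpha^t : U_e \le 2^{-k}\}|$ has the $\mathrm{Binomial}(|E_\alpha^t|, 2^{-k})$ distribution, and $|S_t| = X_{t, k_t}$.

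Next, I would establish a uniform concentration statement via Chernoff/Bernstein together with a union bound over the $O(n^2 \log n)$ pairs $(t, k)$ with $k \le \log n$: with high probability, (i) every $X_{t,k}$ with mean at least $c \epsilon^{-2} \log n$ satisfies $X_{t,k} \in (1\pm\epsilon)\, 2^{-k} |E_\alpha^t|$, and (ii) every $X_{t,k}$ obeys the one-sided additive bound $X_{t,k} \le 2^{-k}|E_\alpha^t| + \epsilon \cdot 2^{-k} E^*$. The constant $30$ in the cap $30 \epsilon^{-2} \log n$ is chosen large enough to comfortably exceed the Chernoff threshold $c$. Inequality (ii) is what rules out spurious large estimates at times $t$ when $|E_\alpha^t|$ is much smaller than $E^*$ but $p_t$ is nonetheless small.

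The key structural lemma is that at every time $t$, the algorithm's level satisfies $p_t E^* = 2^{-k_t} E^* = \Omega(\epsilon^{-2} \log n)$. The transition from level $k-1$ to $k$ happened at some step $t_0$ with $X_{t_0, k-1} = |S_{t_0}| > 30 \epsilon^{-2} \log n$; applying the contrapositive of (i), this forces $2^{-(k-1)} |E_\alpha^{t_0}| = \Omega(\epsilon^{-2} \log n)$, whence $2^{-k} E^* \ge 2^{-k} |E_\alpha^{t_0}| = \Omega(\epsilon^{-2} \log n)$. Combining this with (ii) yields $|S_t|/p_t \le (1+\epsilon) E^*$ for every $t$, hence $\mathsf{max} \le (1+\epsilon) E^*$. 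For the matching lower bound, evaluate at the time $t^*$ with $|E_\alpha^{t^*}| = E^*$; the lemma gives $2^{-k_{t^*}} E^* = \Omega(\epsilon^{-2} \log n)$, so (i) yields $|S_{t^*}|/p_{t^*} \ge (1-\epsilon) E^*$, and therefore $\mathsf{max} \ge (1-\epsilon) E^*$.

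The main obstacle I anticipate is handling the fact that the level $k_t$ is itself a function of the $U_e$'s, so individual events of the form ``the algorithm is at level $k$ at time $t$'' are correlated with ``$X_{t,k}$ concentrates well''. The cleanest workaround is to prove the concentration statement as a single universal ``for all $(t,k)$'' event, after which any realized trajectory of levels automatically satisfies the required inequalities; verifying that the constants $30$, $c$, and the Chernoff/Bernstein exponents line up under this union bound is a routine but necessary check.
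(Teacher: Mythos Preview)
Your proposal is correct and takes essentially the same approach as the paper: both couple the sampling via per-edge randomness (your $U_e$'s, the paper's coin tosses) so that concentration can be proved uniformly over all pairs $(t,k)$ by Chernoff plus a union bound, and both then argue that the algorithm's level never exceeds the target $k$ at which $2^{-k}E^* = \Theta(\epsilon^{-2}\log n)$, so the realized trajectory inherits the concentration bounds. Your explicit handling of the dependence between the random level $k_t$ and the sample is exactly the subtlety flagged in the paper's Acknowledgement; the only minor slip is that the step you call the ``contrapositive of (i)'' is really a separate upper-tail bound (or a consequence of your (ii)), which is covered by your ``routine check'' remark.
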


\begin{proof} 
Let $k$ be such that $2^{k-1}\tau \leq E^*<2^k \tau$ where $\tau = 20 \epsilon^{-2} \log n$. 
First suppose we toss $O(\log n)$ coins for each edge in $E_\alpha^t $ and say that an edge $e$ is sampled at level $i$ if at least the first $i-1$ coin tosses at heads. Hence, the probability that an edge is sampled at level $i$ is $p_i = 1/2^{i}$ and that the probability an edge is sampled at level $i$ conditioned on being sampled at level $i-1$ is $1/2$. Let $s_i^t$ be the number of edges sampled.
It follows from  the Chernoff bound that for $i\leq k$,
\begin{align*}
\prob{|s_i^t - p_i |E_\alpha^t| | \geq \epsilon p_i E^*} \leq \exp \left(- \frac{\epsilon^2  E^*p_i }{4} \right)
\leq  \exp\left( - \frac{\epsilon^2 E^*p_k}{4} \right)\leq \exp\left( - \frac{\epsilon^2 \tau}{8} \right)=1/\poly(n) \ .
\end{align*}
By the union bound, with high probability,  $s^t_i/p_i = |E_{\alpha}^t|\pm \epsilon E^*$ for all $0\leq i \leq k$, $1\leq t\leq n$. 


The algorithm initially maintains the edges in $E_\alpha^t$ sampled at level $i=0$. If the number of these edges exceeds the threshold, we subsample these to construct the set of edges sampled at level $i=1$. If this set of edges also exceeds the threshold, we again subsample these to construct the set of edges at level $i=2$ and so on.
If $i$ never exceeds $k$, then the above calculation implies that the output is $(1\pm \epsilon) E^*$. But if $s^{t}_k$ is bounded above by $(1+\epsilon)E^*/2^k < (1+\epsilon)\tau$ for all $t$ with high probability, then $i$ never exceeds $k$.
\end{proof}

It is immediate that the algorithm uses $O(\epsilon^{-2}  \log n)$ space since this is the maximum number of edges stored at any one time. By Corollary \ref{cor:bound2}, $E^*$ is an $(\alpha+2)$ approximation of $\match(G)$ and hence we have proved the following theorem.

\begin{theorem}
The size of the maximum matching of a graph with arboricity $\alpha$ can be $(\alpha+2)(1+\epsilon)$-approximated  with high probability using a single pass over the edges of $G$ given  $O(\epsilon^{-2} \log n)$ space.
\end{theorem}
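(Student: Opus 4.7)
The plan is to combine the two results already in hand: the previous theorem (Algorithm~1 computes a $(1\pm\epsilon)$-approximation of $E^*$ with high probability) and Corollary~\ref{cor:bound2} (which sandwiches $E^*$ between $\match(G)$ and $(\alpha+2)\match(G)$). Concretely, I would let $\widehat{E}$ denote the output of Algorithm~1 and argue that on the high-probability event where $\widehat{E}\in[(1-\epsilon)E^*,(1+\epsilon)E^*]$ we have
\[
\frac{\widehat{E}}{\alpha+2}\ \leq\ \frac{(1+\epsilon)E^*}{\alpha+2}\ \leq\ (1+\epsilon)\,\match(G)
\qquad\text{and}\qquad
\frac{\widehat{E}}{\alpha+2}\ \geq\ \frac{(1-\epsilon)E^*}{\alpha+2}\ \geq\ \frac{1-\epsilon}{\alpha+2}\,\match(G),
\]
so $\widehat{E}/(\alpha+2)$ (or simply $\widehat{E}$ interpreted as an $(\alpha+2)(1+\epsilon)$-factor upper bound on $\match(G)$) is the desired estimate. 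A standard rescaling of $\epsilon$ by a constant factor converts the two-sided $(1\pm\epsilon)$ guarantee into the one-sided $(\alpha+2)(1+\epsilon)$-approximation stated in the theorem, with no effect on the asymptotic space.

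For the space bound, I would point to step~2(c) of the algorithm: whenever $|S|$ exceeds $30\epsilon^{-2}\log n$, the sampling probability is halved and $S$ is thinned, so at every point in time $|S|=O(\epsilon^{-2}\log n)$. Each element of $S$ is an edge together with two counters, each bounded by $\alpha$ and thus fitting in a single word by the convention in the paper's footnote; the scalars $p$ and $\max$ consume $O(\log n)$ bits (the former because $p$ is halved at most $O(\log n)$ times since the stream has length at most $\binom{n}{2}$). Hence the total working memory is $O(\epsilon^{-2}\log n)$ words.

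The only genuine content, namely the correctness of the sketch and the bound relating $E^*$ to $\match(G)$, has been handled by the preceding theorem and corollary respectively, so the final theorem is essentially a bookkeeping exercise. The mildest point to be careful about is the constant rescaling of $\epsilon$ mentioned above and verifying that the ``$30\epsilon^{-2}\log n$'' cap used in the algorithm is consistent with the ``$20\epsilon^{-2}\log n$'' threshold $\tau$ appearing in the correctness proof, which leaves slack for the high-probability event to hold. No new probabilistic argument is needed.
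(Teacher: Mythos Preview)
Your proposal is correct and follows exactly the same approach as the paper, which simply invokes the previous theorem (the $(1+\epsilon)$-approximation of $E^*$) together with Corollary~\ref{cor:bound2} and observes that the cap in step~2(c) immediately yields the $O(\epsilon^{-2}\log n)$ space bound. If anything, you are more thorough than the paper, which treats this final theorem as a two-line remark.
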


\subsection*{Acknowledgement} In an earlier version of the proof of Theorem 3, we erroneously claimed that, conditioned on the current sampling rate being $1/2^j$, all edges in $E_\alpha^t$ had been sampled at that rate. Thanks to  Sepehr Assadi, Vladimir Braverman, Michael Dinitz, Lin Yang, and Zeyu Zhang for raising this issue.

\end{document}